\newtheorem{theorem}{Theorem}%[section]
\newtheorem{proposition}[theorem]{Proposition}
\theoremstyle{definition}
\theoremstyle{remark}
\def\BibTeX{{\rm B\kern-.05em{\sc i\kern-.025em b}\kern-.08em T\kern-.1667em\lower.7ex\hbox{E}\kern-.125emX}}
\begin{document}
\title{A Deep Reinforcement Learning based Approach for NOMA-based Random Access Network with Truncated Channel Inversion Power Control}

%\title{Grouping-based Uplink NOMA in Random Access IoT Networks}

\author{
   \IEEEauthorblockN{Ziru Chen\IEEEauthorrefmark{1}, Ran Zhang\IEEEauthorrefmark{2}, Lin X. Cai\IEEEauthorrefmark{1}, Yu Cheng\IEEEauthorrefmark{1}, and Yong Liu\IEEEauthorrefmark{3}}
    \IEEEauthorblockA{\IEEEauthorrefmark{1}Department of Electrical and Computer Engineering, Illinois Institute of Technology, Chicago, USA }
    \IEEEauthorblockA{\IEEEauthorrefmark{2} College of Engineering and Computing, Miami University, Oxford, OH, USA}
    \IEEEauthorblockA{\IEEEauthorrefmark{3} Department of Physics and Telecommunication Engineering, South China Normal University, Guangzhou, China}   
 %\IEEEauthorblockA{Emails: \IEEEauthorrefmark{1}$skhairy$@hawk.iit.edu, \{$lincai, cheng$\}@iit.edu,   \IEEEauthorrefmark{2}$zhan2$@uh.edu, \IEEEauthorrefmark{3}$hshan$@zju.edu.cn}
\IEEEauthorblockA{zchen71@hawk.iit.edu, zhangr43@miamioh.edu,  \{lincai, cheng\}@iit.edu,  yliu@m.scnu.edu.cn}
}

% make the title area
\maketitle

\begin{abstract}
As a main use case of 5G and Beyond wireless network, the ever-increasing machine type communications (MTC) devices pose critical challenges over MTC network in recent years. It is imperative to support massive MTC devices with limited resources. To this end, Non-orthogonal multiple access (NOMA) based random access network has been deemed as a prospective candidate for MTC network. In this paper, we propose a deep reinforcement learning (RL) based approach for NOMA-based random access network with truncated channel inversion power control. Specifically,  each MTC device randomly selects a pre-defined power level with a certain probability for data transmission. Devices are using channel inversion power control yet subject to the upper bound of the transmission power. Due to the stochastic feature of the channel fading and the limited transmission power, devices with different achievable power levels have been categorized as different types of devices.     
In order to achieve high throughput with considering the fairness between all devices, two objective functions are formulated. One is to maximize the minimum long-term expected throughput of all MTC devices, the other is to maximize the geometric mean of the long-term expected throughput for all MTC devices.
A Policy based deep reinforcement learning approach is further applied to tune the transmission probabilities of each device to solve the formulated optimization problems. Extensive simulations are conducted to show the merits of our proposed approach. 

\end{abstract}

\begin{IEEEkeywords}
NOMA, random access, truncated channel inversion power control, Deep reinforcement learning
\end{IEEEkeywords}

\vspace{-11pt}
\section{Introduction}
\vspace{-3pt}
The numbers of machine-type communication (MTC) devices and the corresponding mobile data volume have grown rapidly with the development of smart metering, smart traffic surveillance, environmental monitoring, smart grid and other Internet of Things (IoT) applications. More than $14.7$ billion machine type devices are anticipated to be connected to the internet by the year of 2023~\cite{forecast2019cisco}. 
Due to the fact that the channel conditions are too costly to measure and update for low power machines, random access (RA) has attracted great attention for MTC networks~\cite{clazzer20195g}.  

%This paragraph conclude NOMA + NOMA BASED RANDOM ACCESS

Non-orthogonal multiple access (NOMA), as a promising technology in 5G and Beyond, allows more than one device sharing the same time-frequency resource block, which improves the spectral efficiency considerably~\cite{ni2019analysis}. Recently, the notion of NOMA is applied to slotted ALOHA system in order to achieve higher throughput for MTC network~\cite{chen2020optimizing,chen2021performance}. Specifically, the transmission power of devices need to be tuned according to the channel state information (CSI) to guarantee that the received signal strength equals to one of some predefined values at the receiver. By empowering each device to decide its transmission power with different probabilities, NOMA significantly improves the network throughput performance by resolving collisions via successive interference cancellation (SIC) technique. However, due to the power limitation, IoT devices may have various constraints in a realistic IoT network. For instance, some IoT devices may not be able to transmit in all power levels. In order to analyze this scenario, the author in~\cite{chen2021performance1} developed an analytical model, in which two type of devices has been considered and two power levels are available at the receiver side. Based on the analytical model, two algorithms have been proposed to find the optimal transmission probabilities to attain the maximum throughput and max-min fairness respectively. However, the analytical model requires the knowledge of the number of devices for each type, and the proposed algorithms cannot be applied to the case with more than two power levels are available in the system due to the complexity of formulation.

%devices are randomly access the channel with distributively selected power level. Successive interference cancellation (SIC) then applied at the receiver side to resolve collisions. 

%This paragraph conclude MACHINE LEARNING BASED ...
Machine learning (ML) plays an very important role in human life these days. Reinforcement learning (RL), as a promising solution to handle many ML problems, has been applied extensively in NOMA based MTC networks in recent years. In order to minimize random access channel collision, a Q-learning algorithm has been implemented for each MTC device to dynamically select RA slots and transmit power for its transmission~\cite{da2020noma}. The authors in~\cite{tran2021bler} extended the work in~\cite{da2020noma} by further considering the short-packet communication and imperfect successive interference cancellation. In~\cite{liu2021distributed}, the case with unsaturated traffic has been further considered.  
However, the proposed Q-learning methods in~\cite{da2020noma,tran2021bler,liu2021distributed} cannot handle the situation when the number of supported devices changes dynamically due to the limitation of the Q table. Besides, the fairness between devices cannot be guaranteed.
In order to tackle those problems, NOMA based slotted ALOHA scheme may better suit the MTC network.
The author in~\cite{khairy2021data} applied reinforcement learning method to an adaptive NOMA based p-persistent slotted ALOHA protocol. However, the reversed power control has not been considered which may cause the performance degradation.

In this work, we study the network throughput performance of the NOMA-based slotted ALOHA in MTC network.
To capture the realistic power constraints of IoT devices and stochastic wireless fading channels, truncated channel inversion power control is considered. We first analyze the power level design strategy in MTC network. With given power level design, devices in the network are categorized into different types, with some types of devices being capable to utilize all power levels, while other types of devices only using lower power levels.   
In order to guarantee the fairness between different types of devices, two optimization problems have been formulated, namely, to maximize the minimum long-term expected throughput and to maximize the geometric mean of the long-term expected throughput for devices in the network.
To solve the optimization problems effectively, a stateless deep RL learning approach has been proposed. To the best of our knowledge, our approach is the first to optimize network performance of the NOMA based slotted ALOHA in MTC network with truncated channel inversion power control, in which more than two types of devices exist in the network. Extensive simulations have been conducted to validate the performance of our proposed approach.

The remainder of the paper is organized as follows. Section~\ref{sec:system model} describes the system model. The power level design analysis is presented in Section~\ref{sec:powerlevelanalysis}. Optimization problems are formulated and a deep RL approach has been proposed to solve the optimization problem in Section~\ref{sec:algorithms}. The performance of our proposed approach has been validated and analyzed in Section~\ref{sec:numerical}, followed by concluding remarks and future work in Section~\ref{sec:conclusion}.

%Current protocols for massive MTC networks not perform well, which motivate us to investigate a efficient RA schemes which is able to handle massive requests.
\label{sec:label}

\vspace{-9pt}
\section{System Model}  \label{sec:sys}
\begin{figure}[t]
\centering
\includegraphics[width=0.49\textwidth]{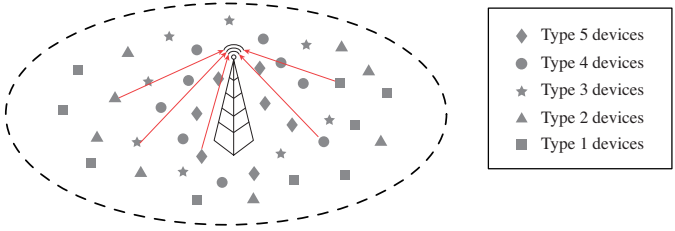}
\caption{System model of a NOMA based uplink MTC network.}
\label{illustration}
\end{figure}
\vspace{-4pt}
As depicted in Fig.~\ref{illustration}, a single cell uplink MTC network system has been considered, in which an access point (AP) is located at the center of a circular coverage area with a radio of $R$ meters, and multiple IoT devices are randomly distributed in its coverage area. The number of IoT devices may change dynamically to capture the mobility feature of the MTC network. 

During the uplink transmission, IoT devices send data to the AP by using NOMA based $p$-persistent slotted ALOHA protocol. 
Specifically, time is slotted. Power-domain NOMA is exploited to allow the AP to receive data from multiple IoT devices in each time slot. During each transmission, IoT devices can adjust its transmission power to ensure the received signal strength at the AP side belongs to a fixed predefined set $\mathcal{V}$ where $\mathcal{V}=\{V_1,V_2,\dots,V_M\}$, and $V_1<V_2<\dots<V_M$.   
Due to the fact that IoT devices have limited transmission power, truncated channel inversion power control has been considered in this scenario. This is, IoT devices in the coverage area may not be able to adjust their transmission power to achieve all power levels. Therefore, $M$ type of devices could exist in the system. i.e., The first type of devices can only utilize the power level $V_1$ as their received signal strength at the AP side. The second type of devices can use all power levels less than or equal to $V_2$ and etc. Thus, the $M$-th type of devices can exploit all $M$ power levels. Let $\mathcal{N} = \{\mathcal{N}_1,\mathcal{N}_2, \dots, \mathcal{N}_M \}$ be the set of IoT devices in the system, in which $\mathcal{N}_n$ denotes the set of $n$-th type of IoT devices and $1\leq n\leq M$. 

Similar to $p$-persistent slotted ALOHA protocol, a transmission probabilities matrix $\mathbb{P}$ has been introduced to guide the uplink transmission of IoT devices. The transmission probabilities matrix is updated by AP and will broadcast to all devices after every $T$ time slots. If we use $m$ to indicate the index of the power level, the probability of a $n$-th type of device to transmit by using $m$-th power level is denoted as $\tau_{n,m}$. Since the transmission probability for each type of devices is less than or equal to 1, we have $\sum_{m=1}^M\tau_{n,m}\leq 1, \forall n\leq M$.  And $\tau_{n,m} = 0, \forall n<m$ due to the fact that $n$-th type of devices cannot transmit with $m$-th power level when $n<m$. Thus, the transmission probabilities matrix $\mathbb{P}$ is a triangular matrix which can be written as
$$
\mathbb{P}=
\begin{bmatrix}
\tau_{1,1}& 0 &\cdots&0\\
\tau_{2,1}&\tau_{2,2}&\cdots&0\\
\vdots&\vdots&\ddots&\vdots\\
\tau_{M,1}&\tau_{M,2}&\cdots&\tau_{M,M}\\
\end{bmatrix}.
$$

% Thus, the transmission power of the device $n$ who is a $m$-th type of device is given by
% %\vspace{-5pt}
% \begin{align}
% P_{n,m}\!\!=\!\!
% \begin{cases}
% V_{j}/g_{n,m},& \mbox{with prob. $\tau_{m,j}$, $\forall V_j\in\mathcal{V}_M$},\\
% 0, & \mbox{with prob. $1\!-\sum_{j=1}^m\tau_{m,j}$}, 
% \end{cases}
% \end{align}
% where $g_{n,m}$ is the channel gain between the $m$-th type of device $n$ and the AP. If the maximum transmission power for IoT devices is denoted by $P_{max}$, we have $P_{n,m}\leq P_{max}, \forall m\leq M$. It is worth noting that $\tau_{m,j} = 0$ when $m<j$ since $V_j$ is not achievable for $m$-th type of device due to the power limitation. 

Multiple devices may transmit simultaneously during each time slot in a random access network. If we use $\mathcal{N}^\prime_m$ to denote the set of devices transmitting by using $V_m$ as received power, the received signal at the AP side can be written as
\vspace{-3pt}
\begin{align}
    y=\sum_{V_m\in\mathcal{V}}\sum_{l\in\mathcal{N}^\prime_m}\sqrt{V_m}x_{l}+z,
\end{align}
\vspace{-3pt}
where $x_{l}$ denotes the transmitted signals of the device $l$, and $z$ is the background noise.

The AP can decode signals sequentially by applying SIC technique based on the descending order of the signal strength. Specifically, the AP starts the decoding from the signal with the highest receiving power under the interference from all other signals which is transmitting concurrently. Without loss of generality, we assume that the AP can decode the signal successfully only when the SINR of the decoding signal larger than or equal to a threshold $\Gamma$. Once the signal has been successfully decoded, it will be canceled by the AP. Thus, the rest signals would not be interfered by it. Throughput of the signal that has been decoded successfully is given by
\begin{align}
    \mathbf{Th} = \log_2(1+\mathbf{SINR}). \nonumber
\end{align}
In contrast, if the signal has not been decoded successfully, it cannot be canceled and will interfere the decoding of the following signals. The corresponding throughput of the signal is thus $0$. It is worth noting that since the AP decode signals on the descending order of the signal strength, once a signal with high power level cannot be decoded successfully, all following signals cannot be decoded successfully as well. It is possible that there are more than one devices transmitting by using the same power level. In this case, the AP decodes their signals sequentially in random order.

% Thus,
% if we use $\mathbf{SINR}_i^{j}$ to denote the SINR of the $i$-th decoding signal with the power level $V_j$, the signal can be decoded successfully only when 
% \begin{align}\label{SINR_condition}
%     \mathbf{SINR}_1^x\geq \Gamma, \forall x\geq j.
% \end{align}
% \begin{proof}
% When $\mathbf{SINR}_j^1 \geq \Gamma$, by taking the advantage of SIC, the interference is decreasing after the first signal with the power level $V_j$ has been successfully decoded and canceled. With the same signal strength, the SINR of the all decoding signals with the power level $V_j$ are larger than $\Gamma$.  Those signals can all be decoded and canceled successfully.
% To ensure $\mathbf{SINR}_j^1 \geq \Gamma$, we need to guarantee that
% \begin{align}
%     \mathbf{SINR}_i^1\geq \Gamma, \forall i> j.
% \end{align}
% If $\mathbf{SINR}_i^1 < \Gamma$ for any $i>j$, the first decoding signal with the power level $V_i$ cannot be decoded and canceled successfully. Since the AP decode signals on the descending order of the signal strength, the signal strength of all following signals are less than or equal to $V_i$. Without SIC, the SINR of all following signals are smaller than $\Gamma$.
% Thus, the $n_j$-th decoding signal with the power level $V_j$ can be decoded only when the condition in \eqref{SINR_condition} satisfied.
% \end{proof}
If all signals before the $i$-th decoding signal with the power level $V_m$ have been decoded and canceled successfully, the interference comes from all following signals that haven't been decoded.
In this case, the SINR of the signal can be written as
\begin{align}
    \mathbf{SINR}_m^{i}=\frac{V_m}{V_m(|\mathcal{N}^\prime_m|-i)+\sum_{x<m} V_x|\mathcal{N}^\prime_x|+\delta^2},
\end{align}
in which $|\cdot|$ denotes the number of devices in the corresponding device set and $i\leq|\mathcal{N}^\prime_m|$, $\delta^2$ is the normalized background noise.

\label{sec:system model}

\vspace{-9pt}
\section{ Power Level Analysis}  \label{sec:powerlevelanalysis}
In this section, we analyze the constraints of power levels after inversion power control. To take the advantage of NOMA, the predefined power level set $\mathcal{V}$ needs to satisfy that
\begin{align}\label{power_level_general}
    \frac{V_m}{\sum_{x<m}V_x+\delta^2}\geq \Gamma, \forall 1<m\leq M
\end{align}
and $V_1/\delta^2\geq \Gamma$. 
\vspace{-3pt}

\begin{proposition}
\label{lemma:power_level}
If we use $V_{max}$ to indicate the maximum achievable power level of the system, the maximum number of power levels we can have is given by
\begin{align}
    M = \lfloor \frac{\log(\frac{V_{max}}{\delta^2\Gamma})}{\log(1+\Gamma)}+1  \rfloor,
\end{align}
in which $\lfloor x \rfloor$ denotes the highest integer smaller than or equal to $x$.
\end{proposition}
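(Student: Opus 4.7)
The plan is to show that the largest admissible $M$ is obtained when every constraint in~\eqref{power_level_general} is tight, then solve the resulting recurrence in closed form and cap it by $V_{max}$.

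First I would argue monotonicity: to fit as many power levels as possible under the ceiling $V_{max}$, one should make each $V_m$ as small as the constraint allows. Indeed, $V_1$ appears on the left-hand side of $V_1/\delta^2 \geq \Gamma$ only, and it appears in the denominators of every subsequent constraint $V_m/(\sum_{x<m} V_x+\delta^2)\geq \Gamma$, so decreasing $V_1$ loosens all later constraints. By the same argument, $V_2$ should be chosen minimally, and so on. Therefore, without loss of generality we set
\begin{align}
V_1 = \Gamma\delta^2, \qquad V_m = \Gamma\Bigl(\sum_{x<m} V_x + \delta^2\Bigr) \text{ for } m>1. \nonumber
\end{align}

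Next I would solve this recurrence by introducing the partial sum $S_m = \sum_{x\le m} V_x + \delta^2$, with $S_0=\delta^2$. The definition of $V_m$ yields $V_m = \Gamma S_{m-1}$, so
\begin{align}
S_m = S_{m-1} + V_m = (1+\Gamma)\,S_{m-1}, \nonumber
\end{align}
which immediately gives $S_m = (1+\Gamma)^m \delta^2$ and therefore the closed form $V_m = \Gamma(1+\Gamma)^{m-1}\delta^2$.

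Finally, I would impose $V_M \le V_{max}$, take logarithms of both sides of $\Gamma(1+\Gamma)^{M-1}\delta^2 \le V_{max}$, and rearrange to obtain $M-1 \le \log(V_{max}/(\delta^2\Gamma))/\log(1+\Gamma)$. Since $M$ is an integer, the largest feasible value is the floor of the right-hand side plus one, matching the stated formula. I do not expect a genuine obstacle here; the only subtle step is the monotonicity claim that justifies reducing to the equality case, and it follows directly from the structural observation that each $V_m$ enters later constraints only through the denominator.
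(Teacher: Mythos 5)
Your proof is correct and follows essentially the same route as the paper: tighten every constraint in~\eqref{power_level_general} to equality, observe that the levels then form a geometric progression with ratio $1+\Gamma$, and count how many terms fit between $\delta^2\Gamma$ and $V_{max}$. The only differences are cosmetic (you anchor the recursion at $V_1=\Gamma\delta^2$ and cap $V_M$ by $V_{max}$, whereas the paper sets $V_M=V_{max}$ and checks $V_1\geq\delta^2\Gamma$), and your explicit monotonicity argument for why the equality case is without loss of generality is a welcome bit of rigor that the paper only asserts.
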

\begin{proof}
To find the maximum number of power levels, the gap between all power levels should be as small as possible. 
Under this circumstance, equation~\eqref{power_level_general} can be rewritten as follow,
\begin{align}
    V_m&= \Gamma (\sum_{x<m}V_x+\delta^2) \nonumber \\
    &= \Gamma V_{m-1} + \Gamma (\sum_{x<m-1}V_x+\delta^2) \nonumber \\
    &= \Gamma V_{m-1} + V_{m-1}.
\end{align}
Thus, we have $V_{m}=V_{max}/(1+\Gamma)^{(M-m)}$.
To ensure $V_1\geq \delta^2\Gamma$,
\begin{align}
    \frac{V_{max}}{(1+\Gamma)^{(M-1)}}&\geq \delta^2\Gamma \nonumber \\
    \frac{V_{max}}{\delta^2\Gamma}&\geq (1+\Gamma)^{(M-1)} \nonumber \\
    \frac{\log(\frac{V_{max}}{\delta^2\Gamma})}{\log(1+\Gamma)}+1&\geq M.
\end{align}
\end{proof}
It is also possible to design power levels in other way as long as equation~\eqref{power_level_general} is satisfied, and the deep RL method we proposed can also solve the problem. However, in the following paper, we will focus on the case that $V_{m}=V_{max}/(1+\Gamma)^{(M-m)}$ which helps us to utilize more power levels for the MTC network.

\vspace{-9pt}
\section{Proposed RL Method} \label{sec:algorithms}
\vspace{-3pt}
\subsection{Optimization Problems}
Our goal is to maximize the network performance by tuning the transmission probabilities matrix $\mathbb{P}$. The most commonly used performance matrix for IoT network is the total expected throughput. However, total expected throughput and fairness are generally conflicting performance metrics in heterogeneous IoT networks. Specifically, devices belong to the same type can achieve long term fairness in a random access network, yet devices belong to different types use different transmission probabilities and achieve different expected throughput. The NOMA transmission may favor the type of devices which could help to achieve the highest total expected throughput and stop the transmission of other type of devices to avoid channel congestion. With considering the fairness, two optimization problems has been considered. The first objective function we considered in this paper is to achieve the max-min fairness of devices in the network. In this case, we formulate the decision problem of tuning $\mathbb{P}$ as a optimization problem
\begin{align}
(\mathbf{P1}) \qquad & \underset{\mathbb{P}}{\text{maximize}}
& \min \left(\mathbf{\overline{Th}}_{l}\right),  \qquad \forall l\in\mathcal{N},
\end{align}
in which $\mathbf{\overline{Th}}_{l}$ is the expected throughput of device $l$ over $T$ time slots.

Geometric mean of the expected throughput for all devices, on the other hand, is a performance matrix that also considered the fairness\cite{khairy2021data}. It is zero if any device in the network do not have chance to transmit. With the increasing of the geometric mean, we ensure that no device is starved without any chance to transmit, and the expected throughput of most devices are increasing. The corresponding optimization problem can be written as
\begin{align}
(\mathbf{P2}) \qquad & \underset{\mathbb{P}}{\text{maximize}}
& \left(\prod_{l\in \mathcal{N}} \mathbf{\overline{Th}}_{l}\right)^{1/|\mathcal{N}|},
\end{align}
where $|\mathcal{N}|$ is the number of devices in the IoT network.

The objective functions in $\mathbf{P1}$ and $\mathbf{P2}$ are mathematically intractable. Therefore, we propose a data-driven approach where a policy-based deep RL agent is applied at the AP to learn the transmission probabilities matrix $\mathbb{P}$ automatically.

\subsection{Deep RL basic}
The goal of a RL approach is to find an optimal strategy, i.e., a sequence of actions that maximizes the long-term expected accumulated discounted reward. Policy based RL methods, specifically, are well-known in addressing tasks with continuous action space. There are several policy based RL algorithms which has been developed recently, i.e., REINFORCE, trust region policy optimization (TRPO), deep deterministic policy gradient (DDPG), and proximal policy optimization (PPO). Among these algorithms, PPO draws great attention due to the fact that it is efficient, easy to be implemented and tuned~\cite{han2020deep}.

In policy based RL algorithms, the most commonly used estimator is given by 
\begin{align}
    L(\pi_{\theta_a}) = E_{\tau}[\pi_t(a_t|s_t;\theta_a)A_t(s_t,a_t)],
\end{align}
where $s_t$ and $a_t$ represent state and action at time $t$ respectively, $\pi_t(a_t|s_t;\theta_a)$ represents the policy at time $t$ and $\theta_a$ is the parameter of actor neural network which is used to generate policy $\pi_t(a_t|s_t;\theta_a)$. With given parameter $\theta_a$, the action can be generated by using Gaussian distribution,
\begin{align}
    \pi_t(a_t|s_t;\theta_a) = \frac{1}{\sqrt{2\pi\sigma(s_t,\theta_\sigma)^2}}\exp-\frac{(a_t-\mu(s_t,\theta_u)^2)}{2\sigma(s,\theta_\sigma)^2},
\end{align}
in which $\mu(s_t,\theta_u)$ and $\sigma(s,\theta_\sigma)$ are generated parameters from actor neural network. 
$A_t(s_t,a_t) = G_t-B_t$ is the advantage value where $G_t$ is the discounted future reward after time $t$ and $B_t$ is baseline.

It is worth noting that in PPO, data generated in previous episode can also be used to update current policy. In order to reuse the historical data, a clipping function is used to avoid large changes between current updated policy and the old policy. The clip function is given by,
\begin{align}
clip(y, 1-\epsilon, 1+\epsilon)\!\!=\!\!
\begin{cases}
y,& \mbox{if $1-\epsilon < y < 1+\epsilon$},\\
1-\epsilon,& \mbox{if $y\leq 1-\epsilon$},\\
1+\epsilon,& \mbox{if $y\geq 1+\epsilon$}. 
\end{cases}
\end{align}
The changes between current updated policy and old policy can be written as
\begin{align}
    r_t= \frac{\pi_t(a_t|s_t;\theta_{new})}{\pi_t(a_t|s_t;\theta_{old})}.
\end{align}
Thus, the new estimator can be modified as 
\begin{align}\label{eq:actor_estimator}
    L(\pi_{\theta_a}) = E_{\tau}[min(r_tA_t, clip(r_t, 1-\epsilon, 1+\epsilon)A_t)].
\end{align}
It can be found that if $A_t(s,a)$ is negative, the estimator is bounded by $(1-\epsilon)A_t$. On the other hand, when $A_t(s,a)$ is positive,  the estimator is at most $(1+\epsilon)A_t$.    

As an actor-critic algorithm, PPO learns the baseline by using a critic neural network. The loss function of the critic network is given by
\begin{align}\label{eq:critic_loss}
    Loss(s_t,\theta_v) = G_t - V(s_t;\theta_v),
\end{align}
where $V(s_t;\theta_v)$ is the value generated by the critic neural network and $\theta_v$ is the parameter of the critic network.
During each episode, the actor neural network optimize the estimator $L(\pi_{\theta_a})$ in \eqref{eq:actor_estimator} with respect to $\theta_a$ and minimize the loss function $Loss(s_t,\theta_v)$ with respect to $\theta_v$. 

\subsection{Overview of Our Approach}
A stateless deep RL approach has been taken to solve our optimization problems $\mathbf{P1}$ and $\mathbf{P2}$, in which RL agent (PPO network) at the AP generates a transmission probabilities matrix $\mathbb{P}$ based on given power level set $\mathcal{V}$; once $\mathbb{P}$ has been generated, AP broadcasts it to all devices in the network. AP knows how many devices exist in the network but do not know the type of each device. Devices then start to upload packets to the AP with corresponding transmission probabilities in $\mathbb{P}$. Without loss of generality, devices are aware of their own type, so the transmission probabilities can be decided by a device once $\mathbb{P}$ is received. After $T$ time slot, the AP calculates the reward, updates the actor critic network and generates new transmission probabilities matrix. The stateless deep RL problems can be formulated as Markov Decision Process (MDPs), consisting of two key elements:
\begin{itemize}
    \item actions: the transmission probabilities matrix $\mathbb{P}$.
    \item reward: the reward collected during $T$ time slots.(e.g., Minimum expected throughput, Geometric mean of the expected throughput).
\end{itemize}
\subsubsection{Actions}
Recall that for the transmission probabilities matrix, we need to guarantee $\sum_{m=1}^n\tau_{n,m}\leq 1, \forall n\leq M$. One way to generate reasonable $\mathbb{P}$ is to introduce the Beta distribution. Beta distribution defines on the interval $[0,1]$, in which two positive parameters $\alpha$ and $\beta$ control the shape of the distribution. For $n$-th type of devices, in order to generate $\tau_{n,m}, \forall 1\leq m\leq n$, we first generate continuous number $a_n$ and $b_n$ from the agent, then calculate $\alpha_n = \exp(a_n)$, $\beta_n = \exp(b_n)$ to ensure $\alpha_n$ and $\beta_n$ are larger than or equal to 0. With given $\alpha_n$ and $\beta_n$, the cumulative distribution function (CDF) of the beta distribution can be calculated easily. Let $F_n(x)$ be the value of the CDF at $x$, we have $F_n(0) = 0$ and $F_n(1) = 1$. The probability that $n$-th type of devices transmit by using $m$-th power level $\tau_{n,m}$ can be calculated as 
\begin{align}\label{eq:prob_cal}
    \tau_{n,m} = F_n\left(\frac{m}{n+1}\right)-F_n\left(\frac{m-1}{n+1}\right), \forall 1\leq m\leq n.
\end{align}
Thus, the probability that the $n$-th type of devices do not transmit is $1-F_n(\frac{n}{n+1})$.

\subsubsection{Reward}
In order to find the $\mathbb{P}$ that achieves the max-min fairness, the reward is defined as the following:
\begin{align}\label{eq:reward_max_min}
    R = \min \left(\mathbf{\overline{Th}}_{l}\right),  \qquad \forall l\in\mathcal{N}.
\end{align}

If the goal is to find the $\mathbb{P}$ which maximizes the geometric mean of the expected throughput for all devices, the reward is given by
\begin{align}\label{eq:reward_geo-mean}
    R = \left(\prod_{l\in \mathcal{N}} \mathbf{\overline{Th}}_{l}\right)^{1/|\mathcal{N}|}.
\end{align}

We summarize our proposed approach in $\mathbf{Algoithm 1}$. 

\begin{algorithm}\label{alg}
\SetAlgoLined
Goal: $\mathbb{P}$\;
 Initialize: $T$, $\mathcal{V}$, $\Gamma$\;
 \While{True}{
  1. Generates $\{a_1, b_1, a_2, b_2, \dots, a_M, b_M\}$ through actor neural network\;
  2. Calculate $\{\alpha_1, \beta_1, \alpha_2, \beta_2, \dots, \alpha_M, \beta_M\}$\;
  3. Generate $\mathbb{P}$ by using \eqref{eq:prob_cal}\;
  4. Broadcast $\mathbb{P}$ to all IoT devices in the network\;
  5. Wait $T$ time slots and calculate $\mathbf{\overline{Th}}_{l},\forall l\in\mathcal{N} $\;
  6. Calculate reward $R$ by using \eqref{eq:reward_max_min} or \eqref{eq:reward_geo-mean}\;
  7. Update weights of actor neural network and critic neural network by optimizing \eqref{eq:actor_estimator} and minimizing \eqref{eq:critic_loss} respectively;
  }
% update $\vec \tau$ =$\left(\tau_{h1}, \tau_{l1},\tau_{l2}\right)$\;
\caption{Agent Learning Process}
\end{algorithm}

\vspace{-9pt}

\section{Performance Evaluation} \label{sec:numerical}
\begin{figure*}[!htb]
  \centering
  \hfill
 \begin{minipage}{0.3\textwidth} 
    \centering
\subfloat[Total expected throughput as the reward function.]{\includegraphics[width=\textwidth]{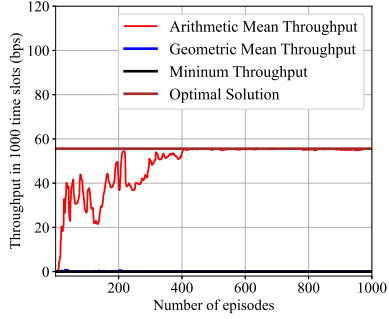}  %\vspace{-3pt}
    \label{fig:to_th}}
  \end{minipage}
  \hfill
 \begin{minipage}{0.3\textwidth}
  	\centering
    \subfloat[Geometric mean of the expected throughput as the reward function.]{\includegraphics[width=\linewidth]{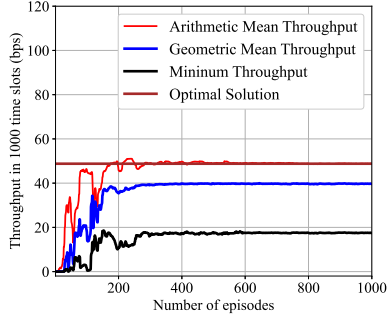}  %\vspace{-3pt}
\label{fig:geo_th}}
  \end{minipage}
  \hfill
  \begin{minipage}{0.3\textwidth}
    \centering
    \subfloat[Minimum expected throughput as the reward function.]{\includegraphics[width=\linewidth]{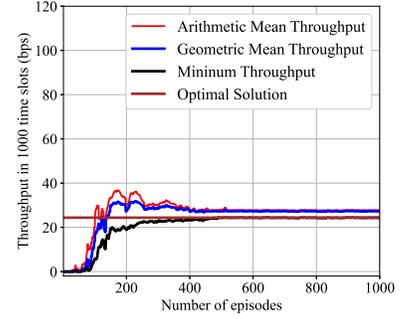} % \vspace{-3pt}
\label{fig:mm_th}}
  \end{minipage}
  \caption{Comparison between different reward functions with respect to the system performance.}
  \label{fig:th_compare}
  \hfill
\end{figure*} % \vspace{-10pt}

\begin{figure*}[!htb]
  \centering
  \hfill
 \begin{minipage}{0.3\textwidth} 
    \centering
\subfloat[Total expected throughput as the reward function.]{\includegraphics[width=\textwidth]{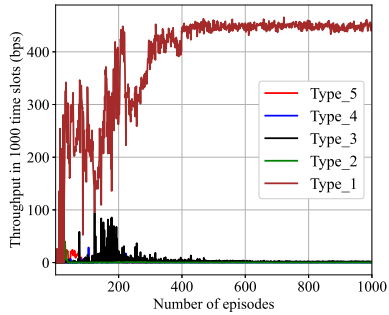}  %\vspace{-3pt}
    \label{fig:total_type}}
  \end{minipage}
  \hfill
 \begin{minipage}{0.3\textwidth}
  	\centering
    \subfloat[Geometric mean of the expected throughput as the reward function.]{\includegraphics[width=\linewidth]{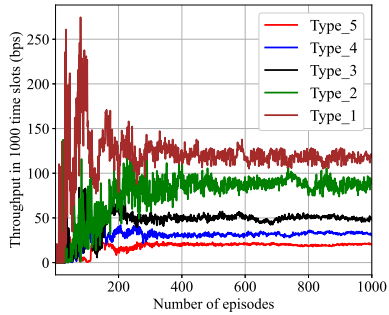}  %\vspace{-3pt}
\label{fig:geo_type}}
  \end{minipage}
  \hfill
  \begin{minipage}{0.3\textwidth}
    \centering
    \subfloat[Minimum expected throughput as the reward function.]{\includegraphics[width=\linewidth]{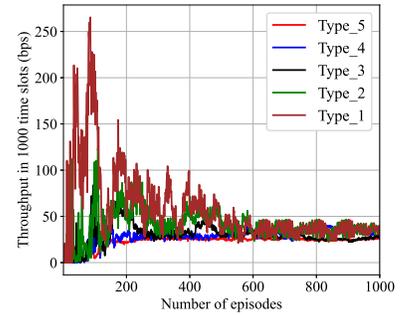} % \vspace{-3pt}
\label{fig:mm_type}}
  \end{minipage}
  \caption{Comparison between different reward functions with respect to the performance of different type of devices.}
  \label{fig:diff_type}
  \hfill
\end{figure*} % \vspace{-10pt

The performance of a NOMA-based random access network with truncated channel inversion, 
which adopts the proposed deep RL approach has been evaluated in this section. 
The IoT network parameters we used in the simulation are $M = 5$, $T = 3000$, $V_{max} = 16$, $\{|\mathcal{N}_1|, |\mathcal{N}_2|, \dots, |\mathcal{N}_5|\} = \{5, 5, 8 , 10, 12\}$, $
\Gamma = 1$, $\delta^2 = 1$ if not otherwise specified. The power level set can be calculated as mentioned in Lemma. \ref{lemma:power_level}, in which $\mathcal{V} = [16, 8, 4, 2, 1]$. 
The parameters for the deep RL approach are shown in follow: learning rate of the actor and critic network are $1e^{-4}$, and the clipping parameter $\epsilon$ is $0.3$.

% The system average throughput with different reward function are shown in Fig.~\ref{fig:to_th} to Fig.~\ref{fig:mm_th}. It can be seen that when we use the total throughput as the reward, 

In Fig.~\ref{fig:th_compare}, the network performance such as the arithmetic mean of the expected throughput, geometric mean of the expected throughput, and the minimum expected throughput, are shown for the cases of (a) using total expected throughput as the reward function, (b) using geometric mean of the expected throughput as the reward function, and (c) using the minimum expected throughput as the reward function respectively. It can be intuitively seen that our approach has fast convergence speed and is capable to find the optimal solution. The corresponding average throughput for each type of devices are shown in Fig.~\ref{fig:diff_type}. 

For the case (a), after convergence, the geometric mean of the expected throughput and the minimum expected throughput are all zeros as shown in Fig.~\ref{fig:to_th}. It can also be observed in Fig.~\ref{fig:total_type} that only type 5 devices who can utilize all power levels are transmitting. This is due to the fact that the transmission of other type of devices will only increase the congestion probability of the channel, the best strategy to maximize the total expected throughput is to maximize the performance of the type 5 devices while other type of devices stop their transmission. 
When it comes to the case (b), the network performance and the average throughput for each type of devices are shown in Fig.~\ref{fig:geo_th} and Fig.~\ref{fig:geo_type} respectively. As can be observed that after convergence, all type of devices are capable to transmit while the total expected throughput is still relatively high. However, the average throughput of devices with more power levels to utilize is higher than the average throughput of devices with less power levels options to transmit. In other word, the network is not totally fair for all devices. In Fig.~\ref{fig:mm_th}, we plot the network performance of case (c). As shown in this figure, after convergence, the gap between arithmetic mean of the expected throughput, geometric mean of the expected throughput and the minimum expected throughput are small. This is because after maximize the minimum expected throughput, the average throughput of each type of devices are generally equal as shown in Fig.~\ref{fig:mm_type}. Otherwise, if the average throughput of a type of devices is smaller than the average throughput of other type of devices, the transmission probabilities of other type of devices should be decremented to promote the average throughput of the type of devices with minimum expected throughput, which helps to increase the minimum expected throughput of the network. It is worth noting that even thought the case (c) achieves fairness and improves the minimum throughput of the system, the total expected throughput is lower than that of case (b), which indicates that in order to achieve max-min fairness, the average throughput of devices with more power levels to select has been severely influenced.

To evaluate the effect of the power level set $\mathcal{V}$ on the system performance, we depict the reward of the proposed RL approach of case (b) and case (c) in Fig.~\ref{fig:com_geo} and Fig.~\ref{fig:com_mm} respectively. The results shown that better network performance could be achieved with more power levels available in the system.

\begin{figure}
\centering
    \includegraphics[width=0.3\textwidth]{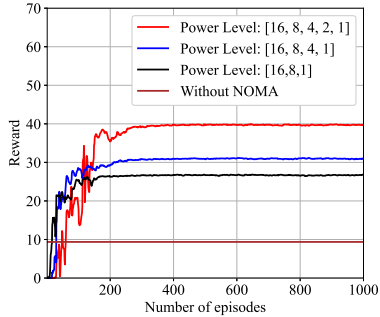} %\vspace{-6pt}
\caption{Reward of the RL approach with different $\mathcal{V}$ when case (b) is considered.} 
\label{fig:com_geo} %\vspace{-6pt}
\end{figure} %\vspace{-4pt}

\begin{figure}
    \centering
    \includegraphics[width=0.3\textwidth]{}  %\vspace{-6pt}
    \caption{Reward of the RL approach with different $\mathcal{V}$ when case (c) is considered.}
    \label{fig:com_mm} %\vspace{-10pt}
\end{figure} %\vspace{-4pt}

% \begin{figure*}[!htb]
%   \centering
%   \hfill
%  \begin{minipage}{0.3\textwidth} 
%     \centering
%     \subfloat[Minimum expected throughput as the reward function.]{\includegraphics[width=\textwidth]{figures/com_geo.eps} 
% \label{fig:com_geo}} %\vspace{-6pt}
%   \end{minipage}
%   \hfill
%  %
%  \begin{minipage}{0.3\textwidth}
%   	\centering
%   	\subfloat[Minimum expected throughput as the reward function.]{\includegraphics[width=\textwidth]{figures/com_mm.eps}  % \vspace{-3pt}
%      %\vspace{-6pt}
%     \label{fig:com_mm}} %\vspace{-10pt}
%   \end{minipage}
%   \hfill
%   \begin{minipage}{0.3\textwidth}
%     \centering
%     \subfloat[Minimum expected throughput as the reward function.]{\includegraphics[width=\linewidth]{figures/mm_type.eps} % \vspace{-3pt}
% \label{fig:mm_type}}
%   \end{minipage}
%   \caption{Comparison between different reward functions with respect to the average throughput of different type of devices.}
%   \label{fig:diff_type}
%   \hfill
% \end{figure*} % \vspace{-10pt

\vspace{-11pt}
\section{Conclusion} \label{sec:conclusion}
\vspace{-6pt}
In this paper, we introduced a novel deep RL approach for NOMA-based slotted ALOHA network with truncated channel inversion power control. The proposed approach enables the AP to tune the transmission probabilities matrix which guides the uplink transmission of IoT devices to improve the network performance. Instead of optimizing the total expected throughput which ignored the fairness in this heterogeneous scenario, two optimization problems have been formulated to maximize the geometric mean of the expected throughput and the minimum expected throughput for all devices. Extensive simulations shown that our approach helps us to find the optimal solutions of our optimization problems. In our future work, we will incorporate intelligent reflecting surface (IRS) in the NOMA-based random access network. 
\vspace{-7pt}
\section{Acknowledgement}
\vspace{-8pt}
This work is supported in part by NSF Career award ECCS1554576 and NSF grants CNS-1320736.
\vspace{-8pt}
\bibliographystyle{IEEEtran}
\bibliography{IEEEfull,ref}
\end{document}